\crefname{hypothesis}{Hypothesis}{Hypotheses}
\title{Parallel coarsening of graph data with spectral guarantees\thanks{\funding{This work is funded under Laboratory Directed Research and Development (LDRD) program, Project No. 218474, at Sandia National Laboratories. Sandia National Laboratories is a multimission laboratory managed and operated by National Technology and Engineering Solutions of Sandia, LLC., a wholly owned subsidiary of Honeywell International, Inc. for the U.S. Department of Energy’s National Nuclear Security Administration under contract DE-NA-0003525. This work was also supported by the U.S. Department of Energy, Office of Science, under award DE-AC52-07NA27344 (FASTMath SciDAC Institute). This paper describes objective technical results and analysis. Any subjective views or opinions that might be expressed in the paper do not necessarily represent the views of the U.S. Department of Energy or the United States Government.}}}
\author{Christopher Brissette\thanks{Rensselaer Polytechnic Intitute, Troy, NY 
  (\email{brissc@rpi.edu}).}
\and Andy Huang\thanks{Sandia National Labs, Albuquerque, NM  
  (\email{ahuang@sandia.gov}).}
\and George Slota\thanks{Rensselaer Polytechnic Intitute, Troy, NY (\email{slotag@rpi.edu}).}}
\begin{document}
\nolinenumbers
\maketitle
\begin{abstract}
Finding coarse representations of large graphs is an important computational problem in the fields of scientific computing, large scale graph partitioning, and the reduction of geometric meshes. Of particular interest in all of these fields is the preservation of spectral properties with regards to the original graph. While many methods exist to perform this task, they typically require expensive linear algebraic operations and yield high work complexities. We adapt a spectral coarsening bound from the literature in order to develop a coarsening algorithm with a work complexity that is drastically smaller than previous work. We further show that this algorithm is easily parallelizable and presents impressive scaling results on meshes.
\end{abstract}

\begin{keywords}
  spectral graph theory, graph coarsening, graph approximation, eigenvalues
\end{keywords}

\begin{AMS}
  05C22, 05C50, 68R10
\end{AMS}

\section{Introduction}
The coarsening of graphs while preserving properties of the original graph has been widely studied in many contexts~\cite{chen2021graph}. Starting with Gabriel Kron's seminal work coarsening resistor networks~\cite{kron1939tensor}, the study has flourished to have uses in computer graphics, multilevel partitioning, and broader network science topics. While there are many different metrics one may preserve when performing coarsening, in this work we focus on coarsening graphs so as to preserve both the smallest eigenvalues and eigenvectors of the Laplacian of the original graph. These techniques are particularly compelling given that graph structures can be recovered from arbitrary data through constructions such as the Vietoris Rips complex~\cite{vietoris1927hoheren}.\\
\indent Several methods exist for performing the task of spectrum preserving coarsening, however they are often computationally expensive. For instance in Liu et al.~\cite{liu2019spectral} a method is suggested where an initial combinatorial coarsening is performed and then cluster assignment is refined by iterating upon and optimizing a given matrix norm. This method does not have an explicit bound on its run time and instead iterates gradient descent to within some tolerance. Alternatively, Loukas~\cite{loukas2019graph} suggests a ``local variation" method where $k$ eigenvalues are pre-computed and then used as a point of comparison when considering potential merges. The explicit complexity of this method is given by $\Tilde{O}(ckm + k^2n + ck^3 +\sum_{l=1}^{c}\Phi_{l}(min\{k^2\delta + k\delta^2, k\delta^2+\delta^3\}+log|\mathcal{F}_l|))$. Here $c$ is the number of coarsening levels, $k$ is the number of desired eigenvectors matched, $n$ is the number of nodes, and $m$ is the number of edges in the original graph. For further details bout this complexity, one can visit the original publication. Both of these methods boast powerful spectral guarantees and impressive results. However, as discussed, they inherently have limitations in terms of compute time due to both the usage of expensive linear algebra operations as well as the methods being inherently sequential and unable to make use of modern parallel architectures. Other spectrum preserving coarsening methods~\cite{lescoat2020spectral,chen2020chordal} fall victim to similar issues.\\
\indent We will first define the notion of a coarsened graph. Note that we will assume all graphs $G = (V,E)$ to be connected with no nodes of degree zero, the number of nodes is $|V| = n$, the number of edges is $|E| = m$, and $n_c$ is the desired number of nodes in our coarsening.
\begin{definition}
Given a graph $G=(V,E)$, a \textbf{coarsened graph} $G_c=(V_c,E_c)$ is formed by merging nodes in $\{v_1,\cdots,v_k\} \subseteq V$ into super-nodes $v_c \in V_c$. The associated edge weights for $v_c$ are given as $a_{v_c x} = \sum_{v_i \in v_c} w_{v_i x}$.
\end{definition}
Note that the Laplacian of $G$ has a higher dimension than the Laplacian of its coarsened counterpart $G_c$. This makes comparing spectral properties difficult. For this purpose we introduce the lift of a coarsened graph. Intuitively this object takes a coarsened graph $G_c$ and projects it to the same dimension as $G$. In doing so it provides us with a convenient way to compare eigenvalues and eigenvectors of the original graph as well as the coarsened graph.
\begin{definition}
Given a coarsened graph $G_c = (V_c,E_c)$ of $G = (V,E)$, its \textbf{lift} $\hat{G} = (\hat{V},\hat{E})$ is defined to be a graph on $n$ nodes, where every super node in $V_c$ is expanded to its original number of nodes in $V$, and the weight between each pair of nodes $u,v\in \hat{V}$ is $\hat{w}(u,v)= \frac{ \sum_{x \in u_c } \sum_{y \in v_c} w(x,y) }{|u_c||v_c|} $. Here $u_c,v_c \in V_c$ are the super nodes that $u,v\in V$ belong to in the coarsened graph, and $|\cdot|$ denotes the number of nodes within those super nodes. Additionally $w(a,b)$ is the weight of the connection between nodes $a,b \in V$.
\end{definition}
We interest ourselves with a bound presented by Jin and Loukas~\cite{jin2020graph}. They derive the following relationship between our original graph $G$, and its lift $\hat{G}$ after coarsening.
\begin{theorem}
    For a graph $G=(V,E)$, if two nodes $u,v \in V$ are such that $\|\frac{w_u}{d_u}-\frac{w_v}{d_v}\|_1 \leq \epsilon$ and we merge them, $\|\Lambda - \hat{\Lambda}\|_\infty \leq \epsilon$.
    \label{thm:approx1}
\end{theorem}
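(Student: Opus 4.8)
The plan is to read $\Lambda$ and $\hat\Lambda$ as the diagonal matrices of eigenvalues of the (normalized) Laplacians of $G$ and of its lift $\hat G$, so that $\|\Lambda-\hat\Lambda\|_\infty=\max_i|\lambda_i-\hat\lambda_i|$ is simply the largest shift suffered by any eigenvalue. The appearance of the transition rows $\frac{w_u}{d_u}$ in the hypothesis signals that the relevant operator is the normalized Laplacian $\mathcal L = I - D^{-1/2}WD^{-1/2}$, whose spectrum coincides with that of the random-walk matrix $D^{-1}W$ whose $u$-th row is exactly $\frac{w_u}{d_u}$. Since $\mathcal L$ is symmetric, Weyl's inequality (equivalently the Courant--Fischer min--max principle) gives $\max_i|\lambda_i-\hat\lambda_i|\le\|E\|_2$ for the perturbation $E=\hat{\mathcal L}-\mathcal L$. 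Thus the first thing I would do is reduce the entire theorem to the single matrix estimate $\|E\|_2\le\epsilon$.

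The central step is to write $E$ out entrywise and bound it. Merging $u$ and $v$ and then lifting replaces the two transition rows $\frac{w_u}{d_u}$ and $\frac{w_v}{d_v}$ by a common averaged row, so the only substantially altered rows and columns of $\mathcal L$ are those indexed by $u$ and $v$; hence $E$ is a sparse, indeed low-rank, perturbation. I would compute the nonzero entries of $E$ and show that each altered row has absolute row sum controlled by $\|\frac{w_u}{d_u}-\frac{w_v}{d_v}\|_1$. The guiding observation is that, up to the averaging weights, the $u$-row of the difference is a multiple of $\frac{w_v}{d_v}-\frac{w_u}{d_u}$, whose $\ell_1$ norm is exactly the hypothesized quantity $\le\epsilon$, and the $v$-row is symmetric. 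This is where the averaging prescribed in the definition of the lift does the real work, forcing the perturbation to cancel everywhere outside the merged block.

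With the row sums in hand I would invoke symmetry: for a symmetric matrix one has $\|E\|_2\le\|E\|_\infty$, the induced $\infty$-norm being the maximum absolute row sum. Combining the two altered rows would then give $\|E\|_\infty\le\|\frac{w_u}{d_u}-\frac{w_v}{d_v}\|_1\le\epsilon$, whence $\|E\|_2\le\epsilon$, and Weyl closes the argument: $\|\Lambda-\hat\Lambda\|_\infty=\max_i|\lambda_i-\hat\lambda_i|\le\|E\|_2\le\epsilon$.

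I expect the main obstacle to be the explicit computation of $E$ and the verification of the claimed cancellation. Two points need care. First, the normalization: merging alters the degrees $d_u,d_v$ and therefore the $D^{-1/2}$ factors, so the entries of the symmetric normalized Laplacian are $W_{xy}/\sqrt{d_x d_y}$ rather than the random-walk quantities $\frac{w_x}{d_x}$ that appear in the hypothesis; I must show that converting between these two normalizations leaves the row-sum bound intact rather than injecting extra lower-order terms. Second, I must confirm that entries outside the $\{u,v\}$ block genuinely cancel, i.e.\ that neighbors of the merged nodes experience no net change beyond what is charged to rows $u$ and $v$. Isolating these terms and showing they vanish or are absorbed into $\|\frac{w_u}{d_u}-\frac{w_v}{d_v}\|_1$ is the crux on which the clean $\ell_1$-to-$\infty$ bound rests.
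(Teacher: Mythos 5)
The first thing to note is that the manuscript does not actually prove Theorem~\ref{thm:approx1}: it is imported verbatim from Jin and Loukas~\cite{jin2020graph}, so there is no in-paper proof to compare yours against. Judged on its own terms, your outline has a genuine gap, and it sits exactly at the step you yourself identify as the crux.

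The skeleton --- Weyl's inequality plus $\|E\|_2\le\|E\|_\infty$ for the symmetric perturbation $E=\hat{\mathcal L}-\mathcal L$, together with the observation that $E$ is supported on the rows and columns indexed by $u$ and $v$ --- is all correct. But the claimed row-sum estimate $\|E\|_\infty\le\|\frac{w_u}{d_u}-\frac{w_v}{d_v}\|_1$ is false in general, and worse, even the weaker target $\|E\|_2\le\epsilon$ fails, so no amount of care in the entrywise computation can rescue this route. The entries of $\mathcal L$ scale as $w(x,y)/\sqrt{d_xd_y}$, whereas the hypothesis controls the random-walk rows $w(u,\cdot)/d_u$; converting between the two normalizations costs a factor of order $\sqrt{d_u/d_y}$, which is unbounded on irregular graphs. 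Concretely, let $u$ and $v$ be non-adjacent, each joined by unit-weight edges to the same $N$ neighbors, and attach one extra pendant vertex $z$ to $u$ with weight $\delta$. Then $\epsilon=\|\frac{w_u}{d_u}-\frac{w_v}{d_v}\|_1=\frac{2\delta}{N+\delta}$ is as small as you like, while the single entry $|E_{zu}|=\sqrt{\delta}\,\bigl|\frac{1}{\sqrt{N+\delta}}-\frac{1}{2\sqrt{N+\delta/2}}\bigr|\approx\frac{1}{2}\sqrt{\delta/N}$ exceeds $\epsilon$ by a factor of order $\sqrt{N/\delta}$. Since $\|E\|_2\ge\max_{x,y}|E_{xy}|$, Weyl's inequality is simply too lossy here: if the theorem holds it is because the eigenvalues move far less than $\|E\|_2$, not because $E$ is small in any induced norm (the same obstruction kills the non-symmetric variant based on $\hat P - P$ for the random-walk matrix, whose $z$-row has absolute sum $1$ in this example). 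A workable argument has to compare the two graphs at the level of quadratic forms, e.g.\ via the Courant--Fischer characterization of the generalized problem $Lf=\lambda Df$ with Rayleigh quotient $\sum_{x,y}w(x,y)(f_x-f_y)^2\,/\,\sum_x d_xf_x^2$, where only the terms touching $u$ and $v$ change and the $\ell_1$ hypothesis enters without degree-ratio losses; that is the style of argument used in the cited source, and it is qualitatively different from a perturbation-norm bound.
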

Here, $\Lambda$ is a sorted vector of eigenvalues of the normalized Laplacian $\mathcal{L}(G)$~\cite{chung1997spectral}, $w_x$ is the weighted adjacency vector of node $x \in V$, and $d_x$ is the degree of node $x \in V$~\cite{godsil2001algebraic}. $\hat{\Lambda}$ is the sorted vector of eigenvalues of the normalized Laplacian of the lift $\mathcal{\hat{G}}$ after merging the nodes $u,v\in V$ in the original graph. This bound is important since every eigenvalue of our coarsened graph is also an eigenvalue of the lift~\cite{jin2020graph}.\\

\section{A new algorithm}
\indent In~\cite{jin2020graph} a greedy algorithm for spectrum consistent coarsening is defined. This algorithm, which has a complexity of $O(m(n+n_c)(n-n_c))$ will be called the ``explicit greedy algorithm" for the remainder of this paper. The algorithm works as follows. For every edge in the graph the 1-norm is computed between the adjacency vectors of the nodes at the ends of that edge. Then these 1-norms are sorted and a merge is performed along the edge with the smallest 1-norm. These steps are then repeated to our desired level of coarsening. First we note how the inner two loops of the explicit greedy algorithm may be parallelized. These loops iterate over each of $m$ edges and compute the associated norm-difference between node adjacencies. Therefore we are computing $O(m)$ norms, where the computation of each norm between nodes $u,v\in V$ has sparse vector work complexity $O(d_u + d_v)$. This gives us an overall complexity for these inner loops of $O(n\left<d^2\right>)$ where $\left<d^2\right>$ is the second moment of the degree distribution of our graph.  Note that none of these norms depend on previously computed norms, so these inner loops can be parallelized such that given $p$ available threads, if we evenly distribute the edges among threads we achieve a shared-memory parallel-time complexity of $O(\frac{n}{p}\left<d^2\right>)$ for our inner loops, where $p$ is the number of threads. \\
\indent To achieve further complexity reduction we would like to be able to remove the outer loop of the greedy algorithm. This will allow us to avoid recomputing norm differences. For this purpose we make an additional observation that the eigenvalue differences of an arbitrary level of coarsening may be bounded in terms of the norm differences in the original uncoarsened graph given by the following theorem. 
\begin{theorem}
Given a set of $s$ merges $\{(a_1,b_1),\cdots,(a_s,b_s)\}$ where $\|\frac{w_{a_i}}{d_{a_i}} - \frac{w_{b_i}}{d_{b_i}}\|_1 \leq \epsilon$ for every $i\in[1..s]$, then we know  $\|\Lambda - \hat{\Lambda}\|_\infty \leq \frac{s(s+1)}{2}\epsilon$.
\label{thm:approx2}
\end{theorem}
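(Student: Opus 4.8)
The plan is to reduce the multi-merge bound to $s$ applications of the single-merge bound (Theorem~\ref{thm:approx1}) through a telescoping argument, and then to control how the governing $\ell_1$ quantities drift as the graph is repeatedly coarsened. Write $G_0 = G$ and let $G_i$ denote the lift obtained after performing the first $i$ merges $(a_1,b_1),\dots,(a_i,b_i)$, with $\Lambda^{(i)}$ the sorted eigenvalue vector of the normalized Laplacian of $G_i$; thus $\Lambda^{(0)} = \Lambda$ and $\Lambda^{(s)} = \hat\Lambda$. Since $\|\cdot\|_\infty$ is a norm, the triangle inequality gives
\begin{equation}
\|\Lambda - \hat\Lambda\|_\infty \;\le\; \sum_{i=1}^{s} \big\|\Lambda^{(i-1)} - \Lambda^{(i)}\big\|_\infty .
\end{equation}
Each summand measures the spectral effect of a \emph{single} merge, namely merging the pair $(a_i,b_i)$ inside the already-coarsened graph $G_{i-1}$, so Theorem~\ref{thm:approx1} applies directly and bounds it by $\epsilon_i := \big\|\frac{w_{a_i}}{d_{a_i}} - \frac{w_{b_i}}{d_{b_i}}\big\|_1$ \emph{as evaluated in $G_{i-1}$} rather than in the original graph.

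The heart of the argument is therefore to show $\epsilon_i \le i\,\epsilon$, after which summing $\sum_{i=1}^s i\,\epsilon = \tfrac{s(s+1)}{2}\epsilon$ closes the proof. I would obtain this from a per-merge stability lemma: performing one merge increases the normalized-adjacency $\ell_1$ difference of any fixed surviving pair by at most $\epsilon$. Granting the lemma, the pair $(a_i,b_i)$ begins with difference $\le\epsilon$ in $G_0$ and is subjected to the $i-1$ preceding merges, so its difference in $G_{i-1}$ is at most $\epsilon+(i-1)\epsilon = i\epsilon$, which is exactly $\epsilon_i \le i\epsilon$. Equivalently the whole thing can be phrased as induction on $s$: the base case $s=1$ is Theorem~\ref{thm:approx1}, and the inductive step adds the $s$-th merge, whose difference in the coarsened graph is at most $s\epsilon$ by the lemma, contributing $s\epsilon$ on top of the inductive bound $\tfrac{(s-1)s}{2}\epsilon$ to give $\tfrac{s(s+1)}{2}\epsilon$.

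To establish the stability lemma I would unfold the definition of the lift. When a pair $(p,q)$ is merged, only the coordinates of a third node $a$'s adjacency vector indexed by $p$ and $q$ change, each being replaced by their average; a short computation shows that $a$'s degree is left unchanged and that its normalized-adjacency vector moves by $\tfrac{|w_a(p)-w_a(q)|}{d_a}$ in $\ell_1$. The drift in the difference $\big\|\frac{w_a}{d_a}-\frac{w_b}{d_b}\big\|_1$ for a fixed pair $(a,b)$ is then bounded, again by the triangle inequality, by $\tfrac{|w_a(p)-w_a(q)|}{d_a}+\tfrac{|w_b(p)-w_b(q)|}{d_b}$, and the remaining task is to show this quantity is at most the merge tolerance $\big\|\frac{w_p}{d_p}-\frac{w_q}{d_q}\big\|_1 \le \epsilon$.

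The main obstacle is precisely this final inequality. Using edge symmetry, the entries appearing in the drift are the same entries that the merge hypothesis controls, but the drift normalizes them by the degrees $d_a,d_b$ of the \emph{affected} nodes, whereas the hypothesis normalizes by $d_p,d_q$ of the \emph{merged} nodes. The two match on (near-)regular graphs, where the factor-$\epsilon$ bound is immediate, but in general they differ by the degree ratios, and I expect the real work to be the bookkeeping that accounts for these ratios as one passes between successive lifts — in particular tracking how the degrees of the merged super-nodes enter the normalization. A secondary item to confirm is that Theorem~\ref{thm:approx1} is genuinely applicable to a lift $G_{i-1}$ and not only to $G$ itself, i.e.\ that the structural hypotheses needed by that theorem are preserved under coarsening; the degree-preservation fact noted above should be the key ingredient there as well.
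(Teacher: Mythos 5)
Your skeleton---telescoping $\|\Lambda-\hat\Lambda\|_\infty \le \sum_{i=1}^s\|\Lambda^{(i-1)}-\Lambda^{(i)}\|_\infty$, applying Theorem~\ref{thm:approx1} to each intermediate graph, and showing the $i$-th merge's tolerance there is at most $i\epsilon$---is exactly the paper's strategy, and your accounting $\sum_i i\epsilon = \tfrac{s(s+1)}{2}\epsilon$ matches. But the stability lemma you try to prove, and the obstacle you flag as the heart of the matter, are aimed at the wrong case. For a surviving pair $(a,b)$ \emph{disjoint} from the merged pair $(p,q)$, no additive-$\epsilon$ lemma is needed at all: in the lift the coordinates of $\frac{\hat w_a}{\hat d_a}-\frac{\hat w_b}{\hat d_b}$ indexed by $p$ and $q$ are each replaced by half of their sum while $d_a$ and $d_b$ are unchanged, so by $|x+y|\le|x|+|y|$ the $\ell_1$ difference can only decrease. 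Your ``main obstacle'' about reconciling $d_a,d_b$ with $d_p,d_q$ arises only because you bound the movement of each vector separately rather than the change in their difference; computed directly, it dissolves.

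The case your sketch does not cover is the one that actually produces the quadratic growth: when $a_i$ is a node already absorbed into a supernode by an earlier merge. There the relevant ``node'' is the supernode, its degree changes to $d_a+d_b$, and every coordinate of its normalized adjacency changes, so the ``only coordinates $p,q$ change, degree unchanged'' computation does not apply. The paper closes this case with the identity $\frac{w_a}{d_a}-\frac{w_a+w_b}{d_a+d_b}=\frac{d_b}{d_a+d_b}\bigl(\frac{w_a}{d_a}-\frac{w_b}{d_b}\bigr)$, which gives $\bigl\|\frac{w_a}{d_a}-\frac{w_a+w_b}{d_a+d_b}\bigr\|_1\le\frac{d_b}{d_a+d_b}\,\epsilon\le\epsilon$, and then the triangle inequality $\bigl\|\frac{w_c}{d_c}-\frac{w_a+w_b}{d_a+d_b}\bigr\|_1\le\bigl\|\frac{w_c}{d_c}-\frac{w_a}{d_a}\bigr\|_1+\bigl\|\frac{w_a}{d_a}-\frac{w_a+w_b}{d_a+d_b}\bigr\|_1\le 2\epsilon$ for an overlapping follow-up merge, iterated to give $k\epsilon$ at step $k$. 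Until you supply this (or an equivalent) treatment of overlapping merges, the proof is incomplete---and this is the only place where real work is required. Your secondary worry, that Theorem~\ref{thm:approx1} must be applicable to the intermediate coarsened graphs and not just to $G$, is legitimate and is glossed over by the paper as well.
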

\begin{proof}
The proof breaks in to two parts. First we wish to show that $\|\frac{w_a}{d_a}-\frac{w_b}{d_b}\|_1 \leq \epsilon$ implies, without loss of generality, that $\|\frac{w_a}{d_a}-\frac{w_a+w_b}{d_a+d_b}\|_1 \leq epsilon$. This statement says that if two nodes have a small adjacency difference norm before merging, each of them will also have a small difference norm when compared with the merged node. Take $v=\frac{w_a}{d_a} - \frac{w_a+w_b}{d_a+d_b}$, then by algebra it can be shown that $\|v\|_1 \leq \frac{\epsilon}{\frac{d_a}{d_b}+1} \leq \epsilon$, which directly implies $\|\frac{w_a}{d_a}-\frac{w_a+w_b}{d_a+d_b}\|_1 \leq \epsilon$ \\

For the second half of our proof, imagine we have performed some number of merges. Then there are two possibilities for the subsequent merge. Either it shares no nodes with the previous merge, or it shares nodes with the previous merge. In the first case, neither of the nodes have been coarsened, so their similarity remains the same and theorem~\ref{thm:approx1} tells us that we simply add $\epsilon$ to our spectral error bound after merging them. In the latter case, assume without loss of generality that our merge consists of nodes $a,c$ where $a$ overlaps with the first merge. Then by triangle inequality, and the first part of our proof we know $\|\frac{w_c}{d_c} - \frac{w_a + w_b}{d_a + d_b}\|_1 \leq 2\epsilon$. This means that the normalized adjacency vector of node $c$ is $2\epsilon$ similar to the normalized adjacency vector of the supernode given by our first merge. Therefore we can apply Theorem~\ref{thm:approx1} again and this error of $2\epsilon$, adds to our bound. We may then repeat this process until our desired level of coarsening, adding a factor of $\epsilon$ to each addition. The worst case error occurs when all the merges overlap, in which case the upper bound is given by the following.
\begin{align*}
    \|\Lambda - \hat{\Lambda}\|_\infty &\leq \sum_{k=1}^{s}k\epsilon = \frac{s(s+1)}{2}\epsilon
\end{align*}

\end{proof}

While looser than the bound provided in~\cite{jin2020graph}, this bound requires no knowledge about intermediate coarsenings of our graph. Because of this, we can remove the outer loop of the explicit greedy algorithm to obtain an order $O(n)$ work complexity reduction while still preserving a spectral bound. This gives us our Algorithm~\ref{alg:approxgreedy}.

\begin{algorithm}
\caption{Approximate Greedy Coarsen (G = (V,E), $n_c$, p)}\label{alg:approxgreedy}
\begin{algorithmic}
\STATE $s \gets |V|$
\STATE $\mathit{fitness} \gets \emptyset$
\STATE $\{E_1,\cdots,E_p\} \gets \mathit{partition}( E, p )$

\FOR{$(a,b) \in E_i$ \textbf{in parallel}} 
    \STATE $\mathit{fitness} \gets \mathit{fitness} \cup ( (a,b), \|\frac{w_a}{d_a} - \frac{w_b}{d_b}\|_1 )$
\ENDFOR

\STATE $\mathit{fitness} \gets \mathit{sort}( \mathit{fitness} )$ 

\WHILE{ $s < n_c$ }
    \STATE $(a,b) \gets \mathit{fitness}[ \mathit{index} ][ 0 ] $
    \STATE $G \gets \mathit{merge}( a, b )$
    \STATE $s \gets s - 1$
\ENDWHILE
\end{algorithmic}
\end{algorithm}

Algorithm~\ref{alg:approxgreedy} closely resembles the explicit greedy algorithm with some reorganization. We still iterate through every edge and calculate a fitness function which is the norm-difference of the normalized adjacencies between each node, however we only perform this once for each edge. Then a sort is performed, which can be done in $O(\frac{m}{p}log(m))$ parallel time. Finally, our merges are performed in order of fitness, requiring $O(n - n_c)$ operations. This gives us a final parallel time of $O( \frac{n}{p}\left<d^2\right> + \frac{m}{p}log(m) + (n-n_c) )$ for Algorithm~\ref{alg:approxgreedy}.

\section{Results}
We present results for both the scaling and approximation properties of our algorithm on two small example graphs, the ``Stanford bunny" from the Stanford 3D Scanning Repository\footnote{\url{http://graphics.stanford.edu/data/3Dscanrep/}} as well as ``ego-Facebook" from the Stanford Large Network Dataset Collection\footnote{\url{https://snap.stanford.edu/data/}}. The former is a nearly-regular mesh and the latter has a highly irregular degree distribution. We find that while spectral approximations appear better on the irregular ``ego-Facebook", the parallel time scaling is far superior on the mesh.
\subsection{Scaling}
We observe in Figure~\ref{fig:scaling} the runtime scaling of our algorithm for the ``ego-Facebook" and ``Stanford bunny" graphs. We can see that, the Facebook graph has drastically worse scaling properties which level out well before reaching the final 2048 thread test. Additionally the Facebook graph requires far more time to coarsen, despite it being almost a tenth the size of the bunny mesh. In comparison, the bunny mesh experiences nearly theoretically ideal strong scaling. This scaling difference can be attributed to load imbalance. Note that in our parallel algorithm we do not parallelize within each norm computation. For a highly irregular network such as ``ego-Facebook" algorithm~\ref{alg:approxgreedy} partitions edges naively across threads without any consideration for the nodes at either end of each edge. The irregularity of the degree distribution in ``ego-Facebook" guarantees that norm-differences along certain edges will require far more compute time than others, and by not accounting for this it is likely that some threads have significant work to perform when computing norms, while others have little to compute. As an example, in the norm computation step one thread may own $k$ edges which connect to the largest degree node in the network with degree $D$, while another may own $k$ edges that all connect to nodes with unit degree. In this case the latter thread only has to complete $2k$ operations, while the former thread has to complete more than $k(D+1)$ operations. This can be mitigated in future work through more sophisticated thread parallelism and usage of more modern GPU architectures, where warp divergence of threads is less of an issue~\cite{coorporation2017nvidia}.
\begin{figure}
    \centering
    \includegraphics[width=0.4\textwidth]{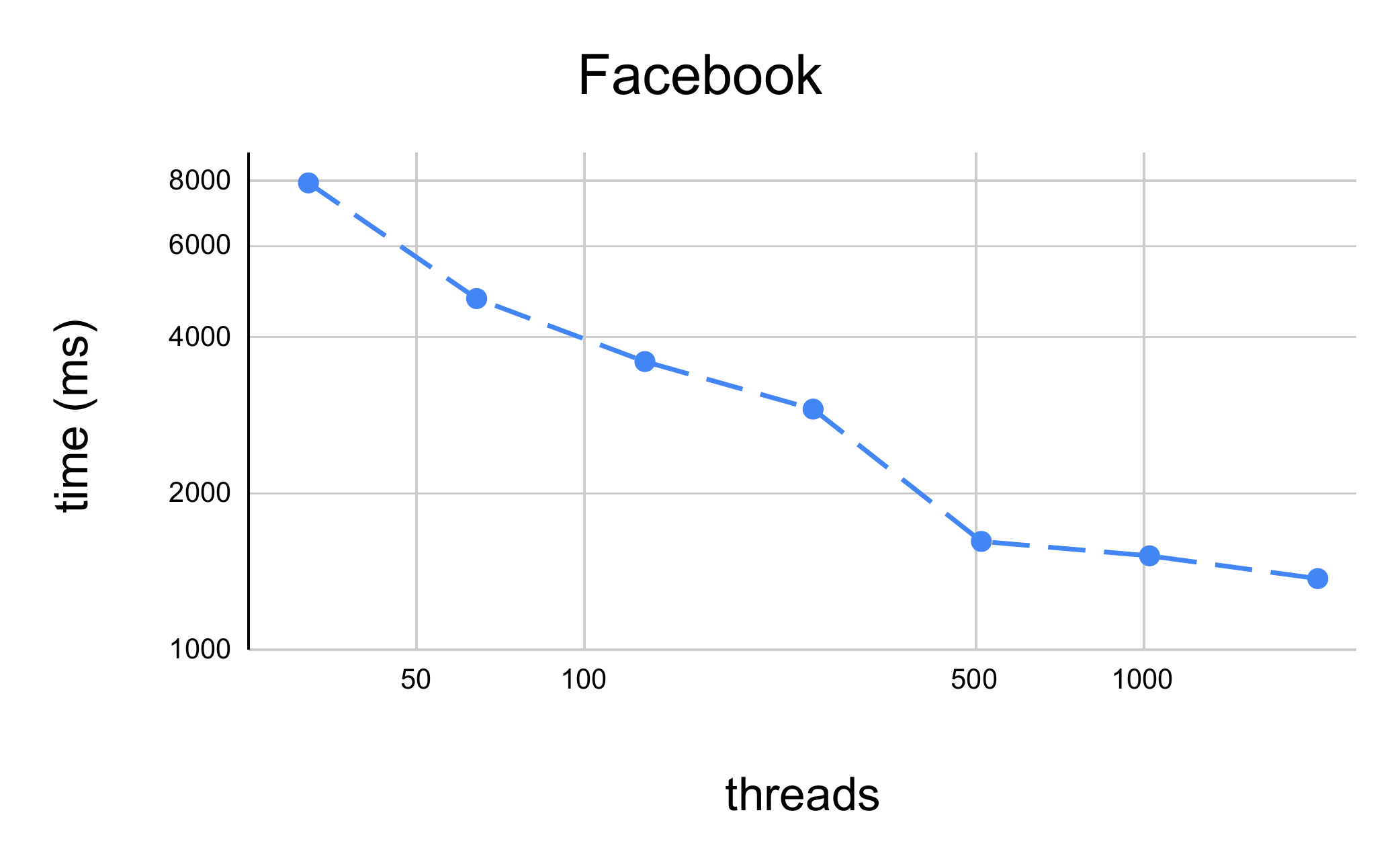}
    \includegraphics[width=0.4\textwidth]{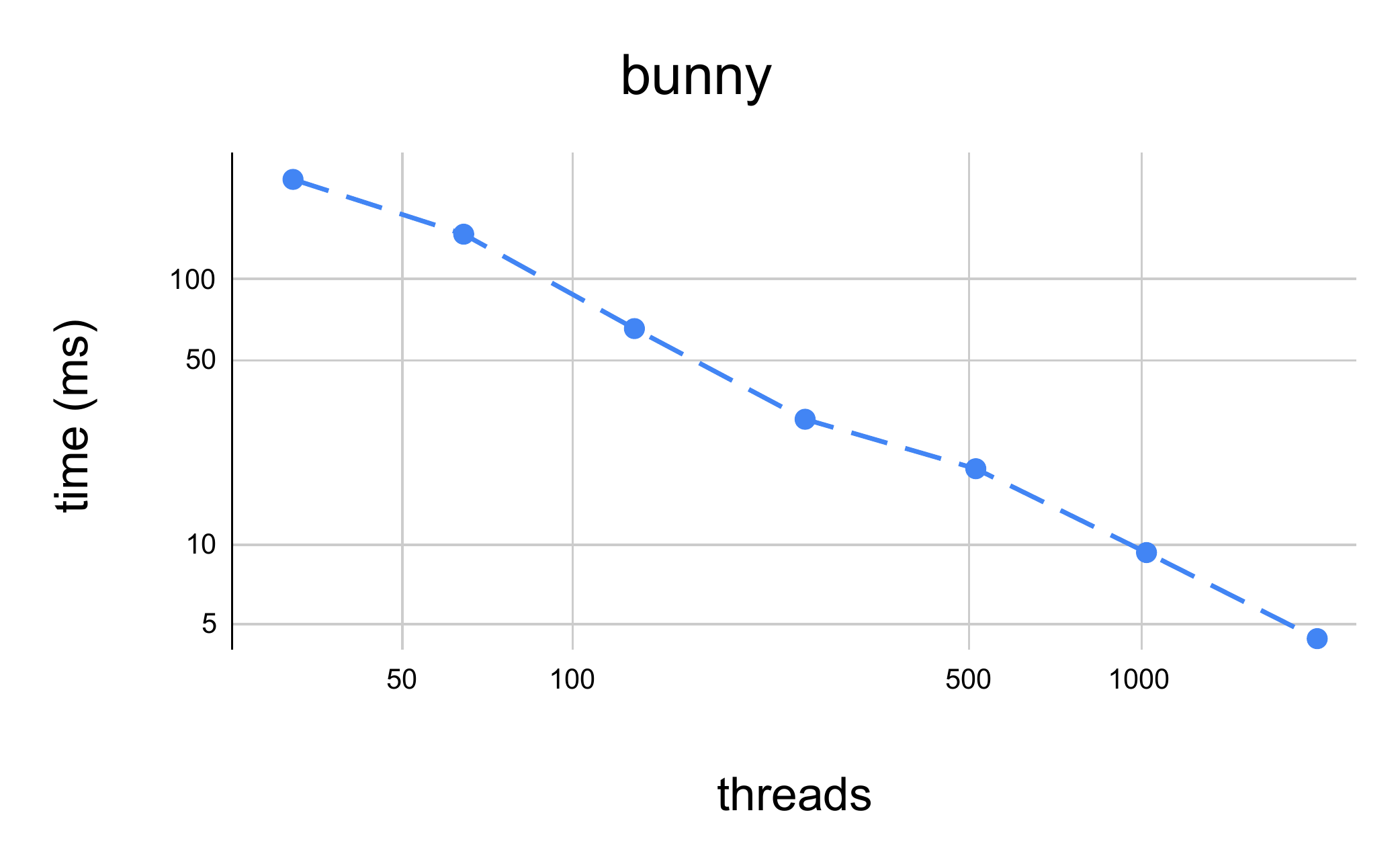}
    \caption{Here we compare the scaling properties of our algorithm running from 32 to 2048 threads on a single Nvidia Quatro M5000 on each of the two test graphs. We can see that the runtimes on the bunny mesh is incredibly small in comparison with the Facebook graph, which has a highly irregular degree distribution. Also notably, we observe that the bunny mesh almost reaches theoretical perfect scaling.}
    \label{fig:scaling}
\end{figure}

\subsection{Spectral approximation}
In Figure~\ref{fig:sectralcomp} we present the eigenvector and eigenvalue approximation properties for the ``Stanford bunny" as well as ``ego-Facebook" on the first fifty nontrivial eigen-pairs. We compare the approximation properties on graphs coarsened to half the number of nodes, quarter the number of nodes, and an eighth the number of nodes. We can see that for both graphs, the half-coarsening is a reasonably accurate approximation of the original eigenvalues, however the eigenvalues quickly stray as we add additional levels of coarsening. As for eigenvector approximations, we consider the inner product between the first fifty normalized eigenvectors of the original graph with the first fifty normalized eigenvectors of the lifted graph for each level of coarsening. In Figure~\ref{fig:sectralcomp} eigenvector comparisons comprise the rightmost three columns. The better the eigenvector approximation is, the more each matrix will resemble an identity matrix since the eigenvectors of the normalized Laplacian are orthogonal~\cite{chung1997spectral}. We see that the eigenvector approximation appears better for ``ego-Facebook" and for the ``Stanford bunny" becomes rather poor after half coarsening. \\
\indent We also show example eigenvectors on the bunny graph and the lift of the half-coarsened bunny graph in Figure~\ref{fig:eigenvectors}. We can see that, while the positive and negative regions drift for the selected eigenvectors, the overall pattern is somewhat conserved. This is of significant interest since the sign patterns of the eigenvectors are important for tasks such as spectral clustering~\cite{zaki2014data}.
\begin{figure}
    \centering
    \includegraphics[width=\textwidth]{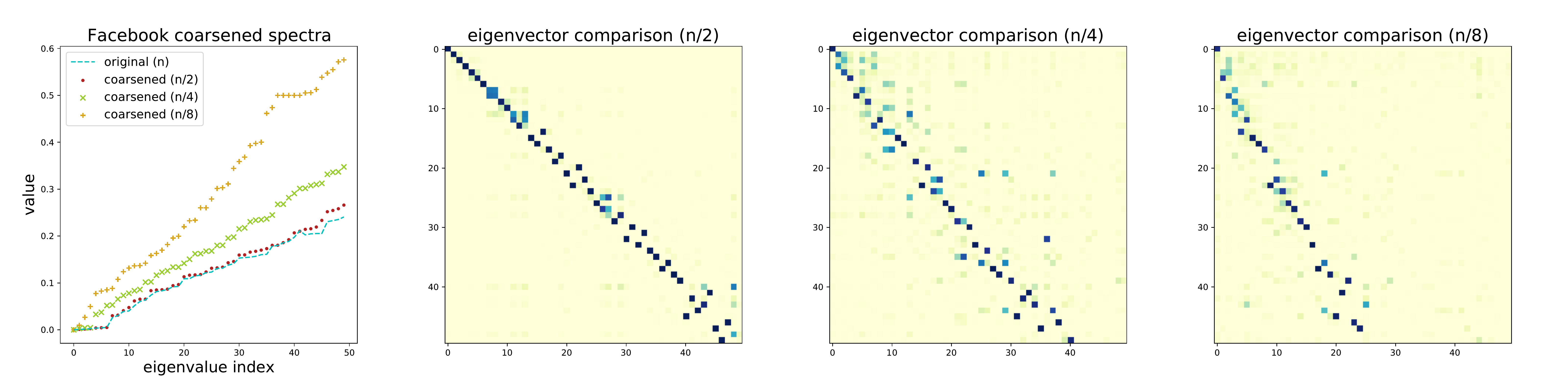}
    \includegraphics[width=\textwidth]{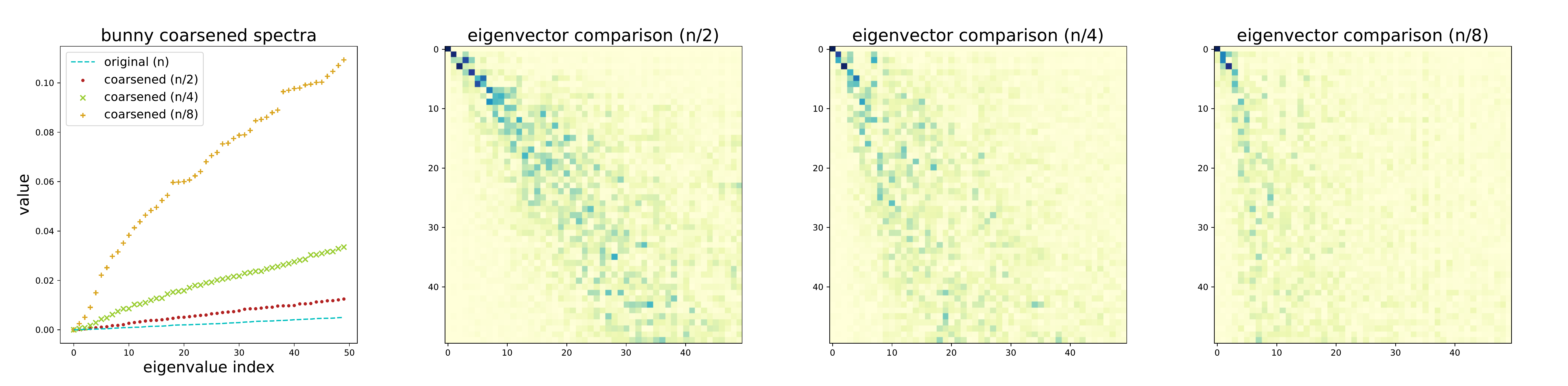}
    \caption{We present spectral approximation properties on the two test-graphs. Each graph was coarsened to half size, quarter size, and eighth size. On the left we compare the first 50 eigenvalues of each lift against the eigenvalues of the original graph. The right three columns compare the angles between the associated eigenvectors by considering the dot product of the eigenvectors in the original graph, with the eigenvectors of each lifted graph. Intuitively, the closer this matrix resembles the identity, the better the eigenvector approximation is. }
    \label{fig:sectralcomp}
\end{figure}

\begin{figure}
    \centering
    \includegraphics[width=0.8\textwidth]{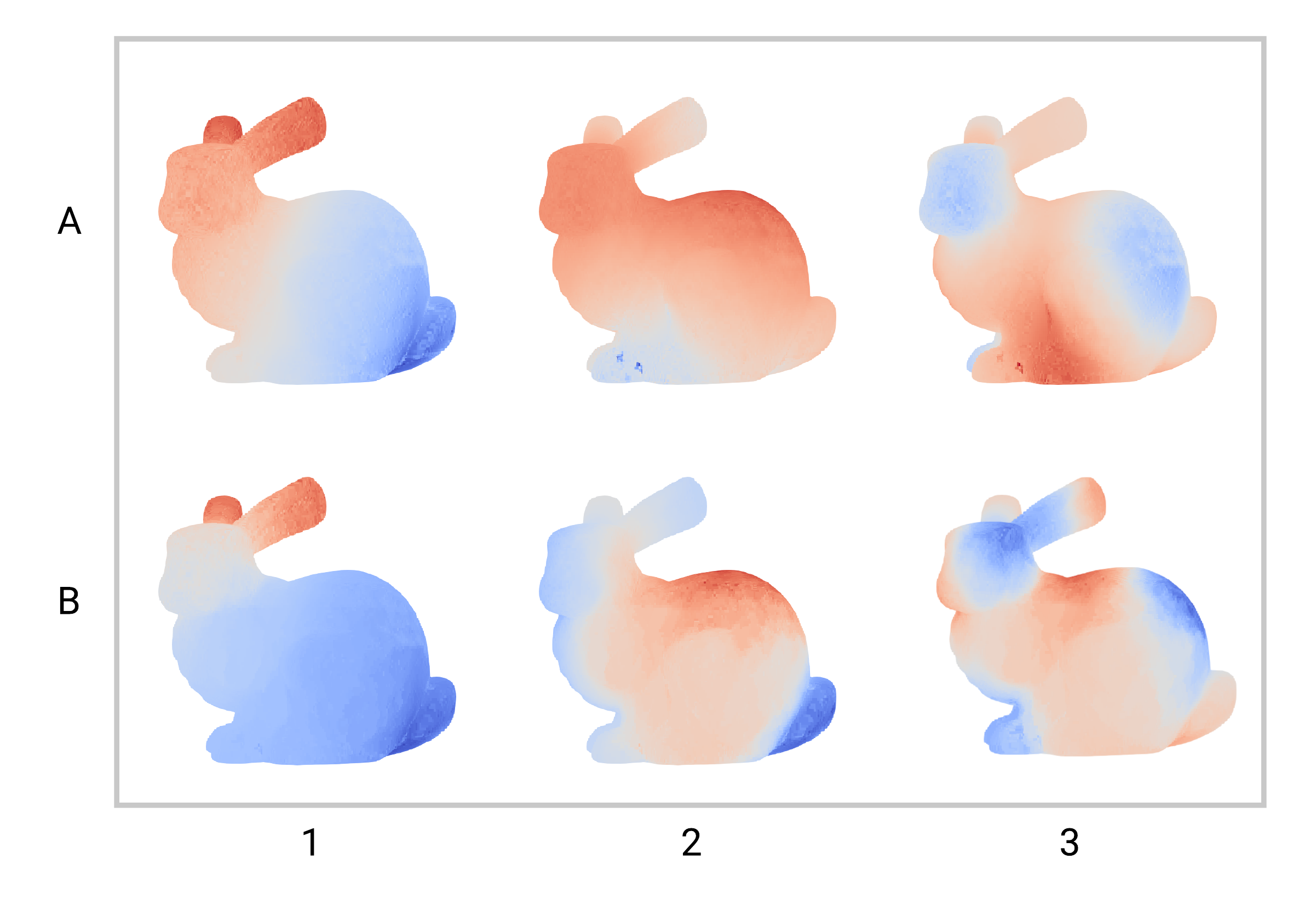}
    \caption{Here we plot three eigenvectors on the original bunny mesh as well as the corresponding lifted eigenvectors on the half-coarsened bunny mesh. The \textbf{A} row are the eigenvectors on the original mesh, and the \textbf{B} row contains the eigenvectors of the coarsened mesh. From left to right the columns correspond to the Fiedler vector, the fifth, and the tenth nontrivial eigenvectors respectively. The blue portions denote negative values and red portions denote positive values. We can see that the eigenvector values have drifted even though the eigenvalues are reasonably close as is seen in Figure~\ref{fig:sectralcomp}.}
    \label{fig:eigenvectors}
\end{figure}

\section{Conclusion and further work}
In this paper we iterated upon a previously researched algorithm for coarsening graphs while preserving approximate eigenvalues. This method can be used to quickly compute spectrally approximate coarsenings of general graph data such as geometric and finite element meshes. We presented a new bound for the eigenvalue differences between original and coarsened graphs, and presented a parallel algorithm for performing graph coarsening within this bound. Our suggested algorithm has a parallel time of $O( \frac{n}{p}\left<d^2\right> + \frac{m}{p}log(m) + (n-n_c) )$ for coarsening graph $G=(V,E)$ to $n_c$ nodes, a vast theoretical speed up over other previously suggested coarsening algorithms preserving spectra. We additionally showed that our algorithm remains faithful to the spectrum of the original graph for limited amounts of coarsening, however the spectrum deviates significantly as the number of nodes in the coarsening diminishes. \\
\indent Our proposed coarsening algorithm leaves room for improvement in the areas of accuracy and parallel time scaling. For the former, we observe that our Bound~\ref{thm:approx2} is rather loose, and it may be tightened by recomputing difference-norms between nodes after some amount of coarsening. This suggests that greater overall accuracy may be obtainable by iterating the difference-norm computations after some amount of coarsening. The optimal amount of coarsening between such iterations remains an open problem. For parallel scaling, several improvements could be made. As mentioned earlier in the manuscript, we partition edges naively among threads which leads to load balancing issues for irregular degree distributions. This could be remedied by considering more sophisticated methods for partitioning the edge set, however this will likely come at the cost of additional preemptive computations. Additionally, faster norm-computations may be achieved by using finer-grained parallelism, where multiple threads are assigned to compute each individual norm. Currently, each norm is computed sequentially by a single thread, which leads to warp divergence and compounds our imbalance issues. Well-known techniques such as hierarchical parallelism~\cite{hong2011accelerating}, loop collapse~\cite{slota_ipdps2015}, and graph adjacency reordering~\cite{kreutzer2014unified} are relatively straightforward to implement and will greatly improve speedups for irregular graph datasets.

\newpage
\bibliography{ref}
\end{document}